\def\di{\displaystyle}
\newtheorem{theorem}{Theorem}
\newtheorem{definition}{Definition}
\newtheorem{lemma}{Lemma}
\newtheorem{proposition}{Proposition}
\newtheorem{remark}{Remark}
\newcommand{\N}{\mathbb{N}}
\newcommand{\Z}{\mathbb{Z}}
\newcommand{\R}{\mathbb{R}}
\newcommand{\fonctionsansdef}[3]{\begin{array}[t]{lrcl}#1 :&#2 &\longrightarrow &#3 \end{array}}
\newcommand{\TT}{\mathbb{T}}
\newcommand{\tTT}{\bar{\mathbb{T}}}
\def\TK{\mathbb{T}^\kappa}
\def\Tk{\mathbb{T}_\kappa}
\def\TKk{\mathbb{T}^\kappa_\kappa}
\def\RS{\mathrm{RS}}
\def\LS{\mathrm{LS}}
\def\RD{\mathrm{RD}}
\def\LD{\mathrm{LD}}
\def\Crd{C^0_{\mathrm{rd}}}
\def\Cdrd{C^{1,\Delta}_{\mathrm{rd}}}
\begin{document}
\setcounter{tocdepth}{3}
\title{A Time scales Noether's theorem}
\author{Baptiste Anerot$^2$, Jacky Cresson$^{1,2}$ and Fr\'ed\'eric Pierret$^3$}

\maketitle

\begin{abstract}
We prove a time scales version of the Noether's theorem relating group of symmetries and conservation laws. Our result extends the continuous version of the Noether's theorem as well as the discrete one and corrects a previous statement of Bartosiewicz and Torres in \cite{BT}.
\end{abstract}

\noindent {\tiny $^1$ SYRTE, Observatoire de Paris, CNRS UMR 8630, 77 Avenue Denfert-Rochereau, 75014 Paris, France}

\noindent {\tiny $^2$ Laboratoire de Math\'ematiques Appliqu\'ees de Pau, Universit\'e de Pau et des Pays de l'Adour,}

\noindent {\tiny  avenue de l'Universit\'e, BP 1155, 64013 Pau Cedex, France}

\noindent {\tiny $^3$ IMCCE, Observatoire de Paris, CNRS UMR 8028, 77 Avenue Denfert-Rochereau, 75014 Paris, France}

\tableofcontents

\section{Introduction}

The calculus on time-scales initiated by Stefan Hilger in \cite{hil} gives a convenient way to deal with discrete, continuous or mixed processes using a unique formalism. In 2004, this theory was used by M. Bohner \cite{bohn} and R. Hilscher and V. Zeidan \cite{HZ} to develop a {\it calculus of variations on time scales}. In this context, many natural problems arise. One of them is to generalize to the time scales setting classical results of the calculus of variation in the continuous case. One of these problem is to obtain a time scales analogue of the Noether's Theorem relating group of symmetries and conservation laws.\\

The aim of this article is precisely to derive a time scales version of the Noether's theorem. We refer to the books of Olver \cite{olver} and Jost \cite{jost} for the classical case.  This problem was initially considered by Z. Bartosiewicz and D.F.M. Torres in \cite{BT} but both the result and the proof are incomplete. In the following, we follow the strategy of proof proposed in \cite{BT} consisting in deriving the Noether's theorem for transformations depending on time from the easier result obtained for transformations without changing the time. In \cite{ca}, we call {\it Jost's method} this way of proving the Noether's theorem as a classical reference is contained in the book \cite{jost}. 

\subsection{Main result}

Our main result can be formulated as follows. \\

Let $\TT$ be a bounded time scale with $a = \min (\TT)$, $b = \max (\TT)$ and $\mathrm{card} (\TT) \geq 3$. We denote by $\rho$ and $\sigma$ the backward and forward jump operator (see Definition \ref{jump}). We set $\TK = \TT \backslash ]\rho(b),b]$, $\Tk = \TT \backslash [a,\sigma(a)[$ and $\TKk = \TK \cap \Tk$. We denote by $C^{1,\Delta}_{\mathrm{rd}}(\TT)$ the set of $\Delta$-differentiable functions on $\TK$ with rd-continuous $\Delta$-derivative (see Definition \ref{functio}).\\

Let us consider a functional $\mathcal{L} :C^{1,\Delta}_{\mathrm{rd}}(\TT) \rightarrow \R$ defined by 
$$\mathcal{L} (q) = \di \int_a^b L(t,q(t),\Delta q(t))\Delta t ,$$
where \fonctionsansdef{L}{[a,b]\times \R^d \times\R^d}{\R} is a Lagrangian. The critical point of $\mathcal{L}$ are solutions of the time-scale Euler-Lagrange equation (see \cite{bourdin1}):
\begin{equation}
\label{tsel}
\nabla \left[ \dfrac{\partial L}{\partial v} (t,q(t),\Delta q(t)) \right] =  \nabla \sigma (t) \dfrac{\partial L}{\partial x} (t,q(t),\Delta q (t)) ,
\end{equation}
for every $t \in \TKk$.\\

Following \cite{BT}, a time-scale Lagrangian functional $\mathcal{L}$ is said to be {\it invariant} under the one-parameter family group $G=\left \{ g_s \right \}_{s\in \R}$ of transformations $g_s (x,t)= (g_s^0 (t) ,g_s^1 (x))$ if and only if for any subinterval $[t_a ,t_b ] \subset [a,b]$ with $t_a, t_b \in \TT$, for any $s \in \R$ and $x\in C^{1,\Delta}_{rd} (\TT )$ one has
\begin{equation}
\label{invariance}
\int_{t_a}^{t_b}L\left(t,x(t),\Delta  x(t)\right)\Delta t = \int_{\tau_a}^{\tau_b}L\left(\tau,g^1_s\circ x \circ(g_s^0)^{-1}(\tau),\Delta_{\tTT} \left(g^1_s\circ x \circ(g_s^0)^{-1}(\tau)\right)\right)\Delta_{\tTT}\tau
\end{equation}
where $\tau_a=g^0_s(t_a)$ and $\tau_b=g^0_s(t_b)$.\\

In the following, we need the notion of {\it admissible} group of symmetries which corresponds to one-parameter group of diffeomorphisms satisfying:

\begin{itemize}
	\item[$\bullet$] the set defined by $\di \tTT_s=g^0_s(\TT)$ is a time-scale for all $s\in\R$,
	\item[$\bullet$] the function $g_s^0$ is strictly increasing,
	\item[$\bullet$] $\di \Delta_{\tTT_s}\left(g_s^0\right)^{-1}$ exist,
	\item[$\bullet$] $\di \Delta g_s^0 \neq 0$ and $\Delta g_s^0$ is rd-continuous.
\end{itemize}

Our main result is the following version of the time-scale Noether's theorem:

\begin{theorem}[Time-scale Noether's theorem]
\label{main}
Suppose $G=\{ g_s (t,x)=(g_s^0 (t) ,g_s^1 (x) )\}_{s\in \R}$ is an admissible one parameter group of symmetries of the variational problem $$\di\mathcal{L} (x)=\di\int_a^b L\left(t,x(t),\Delta x(t)\right)\, \Delta t$$ 
and
\begin{equation}
X= \zeta (t) \di\frac{\partial}{\partial t} +\xi (x) \di\frac{\partial}{\partial x} ,
\end{equation}
be the infinitesimal generator of $G$. Then, the function
\begin{equation}
\label{conslaw}
I(t,x)=
\zeta^{\sigma} 
\cdot 
\left [ 
L(\star ) -\partial_v L (\star ) \cdot \Delta x 
\right ] 
+
\xi^{\sigma} 
\cdot 
\partial_v L (\star ) 
+
\di \int_a^t
\zeta 
\left [  
\nabla \sigma  \partial_t L (\star) -\nabla \left ( 
L-\partial_v L \cdot \Delta x \right ) 
\right ] 
\, \nabla t 
,
\end{equation}
is a constant of motion over the solution of the time-scale Euler-Lagrange equation (\ref{tsel}), i.e. that 
\begin{equation}
\nabla \left [ I(t,x(t)) \right ] =0 ,
\end{equation}
for all solutions $x$ of the time-scale Euler-Lagrange equations and any $t\in\Tk$.
\end{theorem}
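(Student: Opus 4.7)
The plan is to follow the Jost-style reduction announced in the introduction: first establish the conservation law in the simpler sub-case where the symmetry leaves time fixed ($\zeta\equiv 0$, i.e.\ $g_s^0=\mathrm{id}$), and then bootstrap to the general admissible case by reparametrising the image time-scale $\tTT_s$ back to $\TT$.

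\emph{Step 1 --- time-preserving symmetries.} When $g_s^0=\mathrm{id}$, the invariance identity \eqref{invariance} holds on every subinterval $[t_a,t_b]\subset[a,b]$, so the two integrands agree pointwise on $\TK$. Differentiating this pointwise identity at $s=0$ produces the infinitesimal invariance $\partial_x L(\star)\cdot\xi+\partial_v L(\star)\cdot\Delta\xi=0$. Using the Euler--Lagrange equation \eqref{tsel} to eliminate $\partial_x L(\star)$ and applying the $\nabla$-Leibniz rule on time-scales, one recognises this identity as $\nabla\!\left[\xi^\sigma\cdot\partial_v L(\star)\right]=0$, which is the spatial piece of $I(t,x)$.

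\emph{Step 2 --- general admissible symmetries.} For the full group I would perform the substitution $\tau=g_s^0(t)$ in the right-hand side of \eqref{invariance}. The admissibility hypotheses (strict monotonicity of $g_s^0$, rd-continuity and non-vanishing of $\Delta g_s^0$, existence of $\Delta_{\tTT_s}(g_s^0)^{-1}$) are tailored precisely so that the time-scale chain rule applies and converts the right-hand side into an integral over $\TT$ with Jacobian $\Delta g_s^0(t)$. Invariance on every subinterval then yields a pointwise identity on $\TK$ which I would differentiate at $s=0$. Step~1 handles the $\xi$-contributions, and the remaining $\zeta$-terms, involving $\zeta\,\partial_t L(\star)$ together with $\Delta\zeta\cdot(L-\partial_v L\cdot\Delta x)$, assemble via the $\nabla$-product rule into $\nabla\!\left[\zeta^\sigma\cdot(L-\partial_v L\cdot\Delta x)\right]$ plus an uncancelled residue of the form $\zeta\!\left[\nabla\sigma\,\partial_t L(\star)-\nabla(L-\partial_v L\cdot\Delta x)\right]$. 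Integrating this residue from $a$ to $t$ against $\nabla t$ gives exactly the integral term in \eqref{conslaw}, so transposing it to the opposite side produces the announced $I(t,x)$ with $\nabla I=0$ along Euler--Lagrange solutions.

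\emph{Main obstacle.} The delicate point is the residue in Step~2. In the continuous case $\sigma=\mathrm{id}$ and $\nabla\sigma\equiv 1$, so the energy-like part closes up as a total derivative with no remainder, recovering the classical Noether identity. On a general time-scale the forward shift $\sigma$ and the backward derivative $\nabla$ do not commute in the naive way, and the time-translation contribution only becomes $\nabla$-exact \emph{after} the integral correction has been incorporated. Bookkeeping every $\sigma$- and $\rho$-shift appearing in the product and chain rules, and then recognising the obstruction as a $\nabla$-antiderivative in $t$, is the technical heart of the argument and is exactly what is needed to repair the incomplete statement of \cite{BT}.
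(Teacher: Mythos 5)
Your proposal is correct in substance and its endgame coincides with the paper's: both reduce everything to the infinitesimal invariance identity \eqref{main1},
\begin{equation*}
\partial_t L(\star)\cdot\zeta+\partial_x L(\star)\cdot\xi+\partial_v L(\star)\cdot\Delta\xi+\left(L(\star)-\partial_v L(\star)\cdot\Delta x\right)\cdot\Delta\zeta=0,
\end{equation*}
then multiply by $\nabla\sigma$, eliminate $\partial_x L$ through the Euler--Lagrange equation \eqref{tsel}, apply the mixed Leibniz rule \eqref{leib2} twice, and take a $\nabla$-antiderivative, the uncancelled residue $\zeta\left[\nabla\sigma\,\partial_t L(\star)-\nabla\left(L-\partial_v L\cdot\Delta x\right)\right]$ becoming the integral term of \eqref{conslaw}. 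Where you genuinely differ is the route to \eqref{main1}: you obtain it directly, substituting $\tau=g_s^0(t)$ (this is exactly Lemma \ref{changement_bornes_ts}) and differentiating the resulting pointwise identity at $s=0$, in effect re-proving inline the necessary condition of invariance of \cite{BT}; the paper instead detours through the extended autonomous Lagrangian $\bar L(t,x,w,v)=L(t,x,v/w)\,w$, bundle paths, and Lemmas \ref{nabla_diff_TTS}, \ref{key_ts} and \ref{invariance_Ltilde_ts}, and then cites (\cite{BT}, Theorem 2). Your route is shorter and logically sufficient; what the paper's machinery buys is diagnostic rather than deductive: Lemma \ref{key_ts} shows a bundle-path extension of an Euler--Lagrange solution is critical for $\bar{\mathcal L}$ only if the extra condition $(\boldsymbol{\hexstar})$ holds, which pinpoints why the autonomous Noether theorem cannot be invoked, why the correction integral must appear, and hence exactly where \cite{BT} goes wrong.

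Two cautions. First, your framing as a ``Jost-style reduction/bootstrap'' promises more than your argument does, and fortunately so: in Step 2 you never invoke the Step-1 \emph{theorem} (conservation of $\xi^\sigma\cdot\partial_v L$ for time-preserving groups), only the Step-1 \emph{algebra}. Had you literally reduced to the autonomous case --- applying the time-preserving Noether theorem to the extended problem --- you would have reproduced the error of \cite{BT}, since by Lemma \ref{key_ts} solutions of \eqref{tsel} need not satisfy the second extended Euler--Lagrange equation; note also that for a general admissible group the separate identity $\partial_x L\cdot\xi+\partial_v L\cdot\Delta\xi=0$ does not hold, only the combined identity \eqref{main1} does, so ``Step 1 handles the $\xi$-contributions'' must be read as reusing the manipulation, not the conclusion. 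Second, in Step 1 the passage from $\partial_x L\cdot\xi+\partial_v L\cdot\Delta\xi=0$ to $\nabla\left[\xi^\sigma\cdot\partial_v L(\star)\right]=0$ requires first multiplying by $\nabla\sigma$ (on a general time scale $\nabla\sigma\not\equiv 1$), and hence the standing hypothesis, stated explicitly in the paper after Lemma \ref{nabla_diff_TTS}, that $\sigma$ is $\nabla$-differentiable on $\Tk$; your Step-2 residue, which correctly carries the factor $\nabla\sigma\,\partial_t L(\star)$, shows you use this implicitly, but it should be stated as an assumption of the theorem's proof.
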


The proof is given in Section \ref{proof}.\\

In the continuous case $\TT =\R$, one obtain the classical form of the integral of motion
\begin{equation}
I(t,x) = \zeta \left ( L(\star )-\partial_v L (\star) \dot{x} \right ) 
+\xi \partial_v L (\star ) ,
\end{equation}
because the last integral term is reduced to zero. Indeed, on the solutions of the Euler-Lagrange equation one has the identity $\di\partial_t L (\star ) = \di\frac{d}{dt} 
\left ( L(\star) -\partial_v L (\star ) \dot{x} \right )$. \\

In the discrete case, $\TT =\Z$ and transformations without changing time, one recovers the classical integral (see \cite{BCG}, Theorem 12 p.885 and also \cite{lub}):

\begin{equation}
I(x)=\xi^{\sigma} 
\cdot 
\partial_v L (\star ) .
\end{equation}

\subsection{Comments on previous results}

\subsubsection{The Bartosiewicz and Torres result} 

In \cite{BT} the authors obtain a time scales version of the Noether theorem in the shifted version of the calculus of variation on time scales. However, their result can be easily extended to the non shifted case following the same paths. It coincides with our result for transformations without changing  time but differs from it in the other cases.\\ 

As illustrated in Section \ref{examples} with an example and numerical simulations, the result in \cite{BT} is not correct. The reason is that in order to follow their scheme of proof (see Section \ref{commentproof}) the solutions of the Euler-Lagrange equations have to satisfy an auxiliary equation given for all $t\in\TKk$ by (see Lemma \ref{key_ts} in Section \ref{proof}):
\begin{equation}
\label{condi}
\quad \nabla \sigma(\tau)\partial_t L  (\star )+\nabla\left(\Delta x(t) \partial_v L (\star ) -L(\star ) \right)=0,
\end{equation}
which is precisely the quantity under the $\nabla$-antiderivative. This quantity is discussed in the next Section.

\subsubsection{The second Euler-Lagrange equation approach}
 
As already noted, in the continuous case $\TT =\R$, condition (\ref{condi}) is well known and corresponds to the {\it second Euler-Lagrange equation} or the {\it Dubois-Raymond necessary optimality condition}. A time-scales analogue of the second Euler-Lagrange equation was derived by Bartosiewicz, Martins and Torres in (\cite{BMT}, Theorem 5 p.12) leading to another proof of the time-scale Noether's theorem (see \cite{BMT},Section 4, Theorem 6).\\
 
As already said, the result in \cite{BT} is wrong without additional assumptions. As a consequence, we believe that the time scales second Euler-Lagrange equation in \cite{BMT} must be taken with care.

\subsection{A time scales Jost's method of proof}
\label{commentproof}

The approach used by Bartosiewicz and Torres to prove their time scales Noether's theorem is an adaptation of a method which can be found in the classical Textbook by J. Jost and X. Li-Jost \cite{jost} on the calculus of variations. Formally, the idea is very simple. One introduce an extra variable corresponding to the time variable in order to transform the case of invariance under transformations changing time to a case of invariance without changing the "time" variable for an extended Lagrangian which is explicitly constructed from the initial Lagrangian. The corresponding Noether's theorem then follows from the one for transformations without changing time which is easier. We refer to \cite{jost} for more details.\\

However, as in the fractional case\footnote{This work was in fact suggested by a recent article \cite{FM} showing that the fractional Noether theorem proved by Frederico and Torres in \cite{FT} is wrong. However, the article \cite{FM} does not provide a clear understanding of where and why the result is not correct. The second author and A. Szafranska have analysed in \cite{ca} the proof given in \cite{FT} which is an adaptation of the Jost's method to the fractional calculus of variations. Several problems was then pointed out which can occur when generalizing the Jost's method to another framework.}, where the same method of proof were used, several problems arise when adapting the method of Jost to the time-scale case. In particular, one must be very careful with the validity of the change of variables and the fact that one can used the time-scale Noether's theorem for transformations without changing time. In particular, the proof proposed in \cite{BT} does not work precisely because one can not use the autonomous version of the Noether's theorem but only the infinitesimal invariance characterization (see Section \ref{proof}, Lemma \ref{key_ts} and after).\\

It must be pointed out that there exists several way to prove the Noether's theorem. However, we decide to follow the same strategy of Bartosiewicz and Torres in \cite{BT} because this method is very elegant and many other generalizations are based on it. As a consequence, the problems that we are discussing will be of importance for other works.

\subsection{Plan of the paper}

The plan of the paper is as follows. In Section \ref{remind}, we remind some definitions and notations about time-scales and give some particular statements about the chain rule formula and the substitution formula for $\Delta$-derivative in the time-scales setting. Section \ref{proof} gives the proof of our main result. The proof of several technical Lemmas are given in Section \ref{technical}. In Section \ref{examples}, we discuss an example first studied by Bartosiewicz and Torres in \cite{BT}. We compare the quantity that we have obtained with the one derived in \cite{BT} using a numerical integration. In particular, it shows that the conservation law obtained in \cite{BT} does not give an integral of motion contrary to the quantity obtained using our Theorem. 

\section{Preliminaries on time scales}
\label{remind}

In this Section, we remind some results about the chain rule formula, the change of variable formula for $\Delta$-antiderivative which will be used during the proof of the main result. We refer to \cite{agar2,bohn,bohn3,bourdin2} and references therein for more details on time scale calculus. \\

\begin{definition}
\label{jump}
The backward and forward jump operators $\rho, \sigma : \TT \longrightarrow \TT$ are respectively defined by:
	\begin{equation*}
	\forall t \in \TT, \; \rho (t) = \sup \{ s \in \TT, \; s < t \} \; \text{and} \; \sigma (t) = \inf \{ s \in \TT, \; s > t \},
	\end{equation*}
	where we put $\sup \emptyset = a$ and $\inf \emptyset = b$.
\end{definition}

\begin{definition}
	A point $t \in \TT$ is said to be left-dense (resp. left-scattered, right-dense and right-scattered) if $\rho (t) = t$ (resp. $\rho (t) < t$, $\sigma (t) = t$ and $\sigma (t) > t$).
\end{definition}

Let $\LD$ (resp. $\LS$, $\RD$ and $\RS$) denote the set of all left-dense (resp. left-scattered, right-dense and right-scattered) points of $\TT$.

\begin{definition}
	The graininess (resp. backward graininess) function $\fonctionsansdef{\mu}{\TT}{\R^+}$ (resp. $\fonctionsansdef{\nu}{\TT}{\R^+}$) is defined by $\mu(t) = \sigma (t) -t$ (resp. $\nu(t) = t- \rho (t)$) for any $t \in \TT$.
\end{definition}

Let us recall the usual definitions of $\Delta$- and $\nabla$-differentiability.

\begin{definition}
	A function $\fonctionsansdef{u}{\TT}{\R^n}$, where $n \in \N$, is said to be $\Delta$-differentiable at $t \in \TK$ (resp. $\nabla$-differentiable at $t \in \Tk$) if the following limit exists in $\R^n$:
	\begin{equation}
	\lim\limits_{\substack{s \to t \\ s \neq \sigma (t) }} \dfrac{u(\sigma(t))-u(s)}{\sigma(t) -s} \; \left( \text{resp.} \; \lim\limits_{\substack{s \to t \\ s \neq \rho (t) }} \dfrac{u(s)-u(\rho (t))}{s-\rho(t)} \right).
	\end{equation}
	In such a case, this limit is denoted by $\Delta u (t)$ (resp. $\nabla u (t)$).
\end{definition}

\begin{proposition}
	\label{rappeldelta2}
	Let $\fonctionsansdef{u}{\TT}{\R^n}$. Then, $u$ is \\
	$\Delta$-differentiable on $\TK$ with $\Delta u = 0$ if and only if there exists $c \in \R^n$ such that $u(t) =c$ for every $t \in \TT$.
\end{proposition}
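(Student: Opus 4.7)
The forward implication is immediate: if $u \equiv c$ on $\TT$, every difference quotient $(u(\sigma(t))-u(s))/(\sigma(t)-s)$ vanishes identically, so $\Delta u(t) = 0$ for each $t \in \TK$. For the converse I would assume $\Delta u \equiv 0$ on $\TK$ and prove, for every fixed $\varepsilon > 0$, the quantitative estimate
\[
|u(t)-u(a)| \;\leq\; \varepsilon\,(t-a) \qquad \text{for all } t \in \TT;
\]
letting $\varepsilon \to 0$ will then force $u \equiv u(a)$ on $\TT$.

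The plan for the estimate is to set $S_\varepsilon := \{t \in \TT : |u(t)-u(a)| \leq \varepsilon(t-a)\}$ and show $S_\varepsilon = \TT$ by a supremum argument. First I would invoke the standard fact that $\Delta$-differentiability on $\TK$ entails continuity on $\TT$, so that $S_\varepsilon$ is closed; since $a \in S_\varepsilon$, the supremum $T := \sup S_\varepsilon$ lies in $\TT$ and, by closedness, in $S_\varepsilon$. It then suffices to rule out $T < b$. Assuming $T < b$, the fact that $\TT$ contains no point strictly between $\rho(b)$ and $b$ forces $T \in \TK$, and I split into two sub-cases. If $T$ is right-scattered, the identity $\Delta u(T) = (u(\sigma(T))-u(T))/\mu(T) = 0$ yields $u(\sigma(T)) = u(T)$, whence $|u(\sigma(T))-u(a)| = |u(T)-u(a)| \leq \varepsilon(T-a) < \varepsilon(\sigma(T)-a)$, so $\sigma(T) \in S_\varepsilon$, contradicting the maximality of $T$.

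If $T$ is right-dense, so that $\sigma(T) = T$, I would unpack the definition of $\Delta u(T) = 0$: for the fixed $\varepsilon$ there exists $\delta > 0$ such that $|u(s)-u(T)| \leq \varepsilon\,|s-T|$ whenever $s \in \TT$ and $|s-T| < \delta$. Combining this estimate with $T \in S_\varepsilon$ via the triangle inequality gives $[T, T+\delta) \cap \TT \subset S_\varepsilon$, again contradicting $T = \sup S_\varepsilon$. Hence $T = b$, so $S_\varepsilon = \TT$, and the proof concludes by letting $\varepsilon \to 0$. I expect the right-dense sub-case to be the main obstacle: the pointwise identity $\Delta u(T) = 0$ does not by itself produce a new point where $u$ matches $u(T)$, so one must genuinely exploit the $\varepsilon$-$\delta$ content of the $\Delta$-derivative to propagate the quantitative bound just beyond $T$; the right-scattered case, by contrast, is purely algebraic. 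The preliminary statement that $\Delta$-differentiability implies continuity will be cited from the standard time-scale references rather than reproved.
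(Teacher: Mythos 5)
The paper offers no proof of Proposition \ref{rappeldelta2} at all: it is recalled as standard background (in Bohner--Peterson it follows from the mean value theorem, or from uniqueness of $\Delta$-antiderivatives, Theorem \ref{thm_antiderivative}), so your proposal can only be judged on its own merits. What you propose is essentially the classical induction-principle argument on time scales, and most of it is sound: the forward direction, the use of continuity to make $S_\varepsilon$ closed (with closedness playing the role of the left-dense step of the induction principle), the observation that $\TT$ has no point in the open interval between $\rho(b)$ and $b$ so that $T<b$ forces $T\in\TK$, and both sub-cases at $T$ --- the algebraic identity $\Delta u(T)=(u(\sigma(T))-u(T))/\mu(T)$ at a right-scattered point and the $\varepsilon$-$\delta$ propagation at a right-dense point --- are all correct. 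One small point deserves a line: $\Delta$-differentiability gives continuity only on $\TK$; at $b$ itself either $b$ is left-dense, in which case $\TK=\TT$ and there is nothing to add, or $b$ is left-scattered, hence isolated in $\TT$, and continuity there is vacuous.

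The genuine gap is the final inference ``hence $T=b$, so $S_\varepsilon=\TT$.'' This is a non sequitur: you apply the propagation steps only at the single point $T=\sup S_\varepsilon$, and what the supremum argument actually delivers is $b\in S_\varepsilon$, nothing more. Knowing that the supremum of a closed set containing $a$ equals $b$ does not prevent $S_\varepsilon$ from having gaps below $b$, and your desired estimate must hold at \emph{every} $t\in\TT$. The repair is cheap and needs no new idea, which is why I would call this a bookkeeping error rather than a wrong approach: either replace $S_\varepsilon$ by its initial-segment version $S'_\varepsilon=\{t\in\TT:\ |u(s)-u(a)|\leq\varepsilon(s-a)\ \text{for all}\ s\in\TT\cap[a,t]\}$, which is again closed and contains $a$, and run your two-case analysis verbatim at $\sup S'_\varepsilon$; or fix an arbitrary $t_0\in\TT$ and apply your supremum argument to $S_\varepsilon\cap[a,t_0]$, noting that if $T_0:=\sup\left(S_\varepsilon\cap[a,t_0]\right)<t_0$ then in the right-scattered case $\sigma(T_0)\leq t_0$ (since $\sigma(T_0)=\inf\{s\in\TT:\ s>T_0\}$ and $t_0>T_0$ lies in $\TT$), and in the right-dense case the right neighborhood $[T_0,T_0+\delta)$ contains points of $\TT\cap(T_0,t_0]$; either way $T_0=t_0$, so $t_0\in S_\varepsilon$ for every $t_0$, and letting $\varepsilon\to 0$ then legitimately yields $u\equiv u(a)$ on $\TT$.
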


The analogous results for $\nabla$-differentiability are also valid.

\begin{definition}
\label{functio}
A function $u$ is said to be rd-continuous (resp. ld-continuous) on $\TT$ if it is continuous at every $t \in \RD$ (resp. $t \in \LD$) and if it admits a left-sided (resp. righ-sided) limit at every $t \in \LD$ (resp. $t \in \RD$).
\end{definition}

We respectively denote by $\Crd(\TT)$ and $\Cdrd(\TT)$ the functional spaces of rd-continuous functions on $\TT$ and of $\Delta$-differentiable functions on $\TK$ with rd-continuous $\Delta$-derivative. \\

Let us denote by $\int \Delta \tau$ the Cauchy $\Delta$-integral defined in \cite[p.26]{bohn} with the following result (see {\cite[Theorem 1.74 p.27]{bohn}}):

\begin{theorem}
	For every $u \in \Crd(\TK)$, there exist a unique $\Delta$-antiderivative $U$ of $u$ in sense of $\Delta U = u$ on $\TK$ vanishing at $t=a$. In this case the $\Delta$-integral is defined by
	\begin{equation*}
	U(t) = \int_a^t u(\tau) \Delta \tau
	\end{equation*}
	for every $t \in \TT$.
	\label{thm_antiderivative}
\end{theorem}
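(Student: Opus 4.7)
The plan is to separate the two assertions: uniqueness follows immediately from material already in the section, while existence requires a constructive argument that exploits the structure of the time scale.

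For uniqueness, suppose $U_1$ and $U_2$ are both $\Delta$-antiderivatives of $u$ on $\TK$ satisfying $U_i(a)=0$. Then $V := U_1 - U_2$ is $\Delta$-differentiable on $\TK$ with $\Delta V \equiv 0$, so Proposition \ref{rappeldelta2} yields a constant $c$ with $V \equiv c$ on $\TT$. Evaluating at $t=a$ forces $c=0$, hence $U_1 = U_2$. This shows that if an antiderivative vanishing at $a$ exists, it is unique, so the notation $U(t)=\int_a^t u(\tau)\Delta\tau$ is unambiguous.

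For existence, I would build $U$ directly from the decomposition of $\TT$ into maximal right-dense intervals and right-scattered points. Set $U(a)=0$, and extend as follows. Across each right-scattered point $t$ the differentiability condition $\Delta U(t)=u(t)$ forces the jump relation $U(\sigma(t)) = U(t)+\mu(t)u(t)$, which gives a unique increment at every such point. On each maximal subinterval $[c,d]\cap \TT$ where $\sigma(t)=t$, the function $u$ is continuous (by rd-continuity), hence Riemann-integrable, and the ordinary integral $s\mapsto \int_c^s u(\tau)\,d\tau$ is a $C^1$-function whose classical derivative coincides with $u$; since $\Delta$ reduces to the usual derivative at right-dense points, this piece contributes the required antiderivative on that component. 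Gluing these pieces along $\TT$ in the natural order, and using the left-sided limit of $u$ at left-dense points (guaranteed by rd-continuity) to ensure continuity of $U$ there, yields a function $U$ satisfying $\Delta U = u$ on all of $\TK$ and $U(a)=0$.

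The main obstacle is making the gluing rigorous for time scales of arbitrary complexity: $\TT$ may contain a Cantor-type structure where right-dense intervals and scattered points interleave densely, and one must verify both that the resulting $U$ is continuous at every left-dense point and that $\Delta$-differentiability is satisfied at each $t\in\TK$ regardless of the local type of $t$. The standard way to bypass these delicate set-theoretic issues, as carried out in \cite{bohn}, is to establish first Hilger's existence theorem for \emph{pre-antiderivatives} (obtained by an induction on the structure of $\TT$ that uses rd-continuity to produce a continuous function $\Delta$-differentiable off a countable set) and then to upgrade a pre-antiderivative to a genuine antiderivative under the hypothesis $u\in\Crd(\TK)$. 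Invoking that machinery completes the existence proof, and combined with the uniqueness argument above gives the stated definition $U(t)=\int_a^t u(\tau)\,\Delta\tau$.
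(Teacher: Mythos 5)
Your proposal is correct, but note that the paper itself contains no proof of this statement: it is imported verbatim from Bohner and Peterson (\cite[Theorem 1.74 p.27]{bohn}), so there is no in-paper argument to compare against. Your uniqueness half is the standard one and is fully self-contained within the paper: the difference of two antiderivatives has vanishing $\Delta$-derivative on $\TK$, hence is constant by Proposition \ref{rappeldelta2}, and the normalization $U(a)=0$ kills the constant. Your existence half is where the real content lies, and you correctly diagnose the weak point of the naive construction: the decomposition of $\TT$ into maximal right-dense intervals and right-scattered points need not be a finite (or order-isolated) alternation, since right-scattered points can accumulate at left-dense points, so the ``gluing in the natural order'' is not literally an induction; one would additionally need summability of the jumps $\mu(t)u(t)$ (which does hold, since $u$ is bounded on the compact time scale and $\sum \mu(t)\leq b-a$, but you do not say this). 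Your fallback — invoking Hilger's pre-antiderivative existence theorem and upgrading it under rd-continuity — is precisely the route taken in \cite{bohn}, so your proposal ends up following the same path as the source the paper cites, while adding a uniqueness argument that the paper's statement implicitly assumes but never spells out.
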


We have a time-scale chain rule formula (see \cite[Theorem 1.93]{bohn}).

\begin{theorem}[Time-scale Chain Rule]
\label{tscr}
Assume that \fonctionsansdef{v}{\TT}{\R} is strictly increasing and $\tilde{\TT}:=v(\TT)$ is a time-scale. Let \fonctionsansdef{w}{\tilde{\TT}}{\R}. If $\Delta v(t)$ and $\Delta_{\tilde{\TT}}(v(t))$ exist for $t\in\TK$, then
\begin{equation}
\Delta\left(w\circ v\right) = \left(\Delta_{\tilde{\TT}}\circ v\right) \Delta v
\end{equation}
\end{theorem}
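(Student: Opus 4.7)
The plan is to verify the identity $\Delta(w\circ v)(t) = (\Delta_{\tilde{\TT}} w)(v(t))\cdot\Delta v(t)$ pointwise at each $t\in\TK$, splitting into the two usual time-scale cases: $t$ right-scattered and $t$ right-dense. The essential preparatory observation is that strict monotonicity of $v$, together with the hypothesis $\tilde{\TT} = v(\TT)$, transports the forward jump operator along $v$:
\[
\tilde\sigma(v(t)) = v(\sigma(t)) \quad\text{for every } t \in \TK,
\]
since $\sigma(t)=\inf\{s\in\TT:s>t\}$ is sent by the strictly increasing $v$ to $\inf\{\tilde s\in\tilde{\TT}:\tilde s>v(t)\}=\tilde\sigma(v(t))$. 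In particular, $t$ is right-scattered in $\TT$ if and only if $v(t)$ is right-scattered in $\tilde{\TT}$.

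If $t$ is right-scattered, both $\Delta$-derivatives reduce to plain difference quotients. Using $v(\sigma(t)) = \tilde\sigma(v(t))\neq v(t)$ (the inequality from strict monotonicity), I would multiply and divide by $v(\sigma(t))-v(t)$ to obtain
\[
\Delta(w\circ v)(t)
=\frac{w(v(\sigma(t)))-w(v(t))}{\sigma(t)-t}
=\frac{w(\tilde\sigma(v(t)))-w(v(t))}{\tilde\sigma(v(t))-v(t)}\cdot\frac{v(\sigma(t))-v(t)}{\sigma(t)-t}
=(\Delta_{\tilde{\TT}} w)(v(t))\cdot\Delta v(t).
\]
If $t$ is right-dense, then $\sigma(t)=t$ and hence $\tilde\sigma(v(t))=v(t)$, forcing both derivatives to be computed as genuine limits. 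For $s\in\TT\setminus\{t\}$, strict monotonicity again gives $v(s)\neq v(t)$, which legitimises the factorisation
\[
\frac{(w\circ v)(t)-(w\circ v)(s)}{t-s}
=\frac{w(v(t))-w(v(s))}{v(t)-v(s)}\cdot\frac{v(t)-v(s)}{t-s}.
\]
The second factor tends to $\Delta v(t)$ by hypothesis; the first factor tends to $(\Delta_{\tilde{\TT}} w)(v(t))$ once one knows that $v(s)\to v(t)$ in $\tilde{\TT}\setminus\{v(t)\}$ as $s\to t$ in $\TT\setminus\{t\}$, which is the continuity of $v$ at $t$ that follows automatically from the existence of $\Delta v(t)$ at a right-dense point.

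The main obstacle is the justification of the limit of the first factor in the right-dense case. One must be careful because the limit defining $(\Delta_{\tilde{\TT}} w)(v(t))$ \emph{a priori} ranges over all $\tilde s\in\tilde{\TT}\setminus\{v(t)\}$, whereas in the factorisation above the auxiliary variable $\tilde s=v(s)$ is restricted to the image of $\TT\setminus\{t\}$ under $v$. The hypothesis $\tilde{\TT}=v(\TT)$ combined with strict monotonicity resolves this: $s\mapsto v(s)$ is a bijection between $\TT\setminus\{t\}$ and $\tilde{\TT}\setminus\{v(t)\}$, so no approach path in $\tilde{\TT}\setminus\{v(t)\}$ is missed. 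Once this is in place, multiplying the two limits yields the chain rule on all of $\TK$.
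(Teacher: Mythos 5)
The paper does not actually prove this statement---it is quoted verbatim from Bohner and Peterson (Theorem 1.93, cited as \cite{bohn})---and your argument is essentially that textbook proof: transport of the forward jump operator via $\tilde{\sigma}(v(t))=v(\sigma(t))$, followed by the right-scattered/right-dense case split, so your proposal is correct and takes the standard route. The only step deserving one extra line is the right-scattered case when $t$ is simultaneously left-dense: there the reduction of $\Delta(w\circ v)(t)$ to the plain difference quotient requires continuity of $w\circ v$ at $t$, which does follow from your hypotheses (existence of $\Delta v(t)$ gives continuity of $v$ at $t$, and existence of $\Delta_{\tilde{\TT}}w(v(t))$ gives continuity of $w$ at $v(t)$ in $\tilde{\TT}$), exactly as you already invoke in the right-dense case.
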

With the time-scale chain rule, we obtain a formula for the derivative of the inverse function (see \cite[Theorem 1.97]{bohn}).
\begin{theorem}[Derivative of the inverse]
Assume that \fonctionsansdef{v}{\TT}{\R} is strictly increasing and $\tilde{\TT}:=v(\TT)$ is a time-scale. Then
\begin{equation}
\frac{1}{\Delta v}=\Delta_{\tilde{\TT}}\left(v^{-1}\right)\circ v
\end{equation}
at points where $\Delta v$ is different from zero.
\end{theorem}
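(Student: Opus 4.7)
The plan is to invoke the time-scale chain rule (Theorem~\ref{tscr}) with $w:=v^{-1}$. Since $v^{-1}\circ v$ is the identity on $\TT$, its $\Delta$-derivative equals $1$ identically on $\TK$ (by Proposition~\ref{rappeldelta2} applied to the function $\mathrm{id}-a\cdot \mathbf{1}$, or by direct computation from the definition). Once the chain rule is legitimately applied, we get
\[
1 \;=\; \Delta\bigl(v^{-1}\circ v\bigr)(t) \;=\; \bigl(\Delta_{\tilde{\TT}}(v^{-1})\circ v\bigr)(t)\cdot \Delta v(t),
\]
and dividing by $\Delta v(t)\neq 0$ yields the formula. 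So the whole content of the proof is the verification, at a fixed $t\in\TK$ with $\Delta v(t)\neq 0$, that the hypothesis of Theorem~\ref{tscr} is met, namely that $\Delta_{\tilde{\TT}}(v^{-1})$ exists at $\tau:=v(t)\in\tilde{\TT}$.

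To verify this I would argue directly from the definition of the $\Delta$-derivative on $\tilde{\TT}$. Because $v$ is a strictly increasing bijection $\TT\to\tilde{\TT}$, it intertwines the two forward jump operators: $\sigma_{\tilde{\TT}}(v(t))=v(\sigma(t))$, hence $v^{-1}\bigl(\sigma_{\tilde{\TT}}(\tau)\bigr)=\sigma(t)$. For $s\in\tilde{\TT}$ with $s\neq \sigma_{\tilde{\TT}}(\tau)$, write $s=v(r)$ with a unique $r\in\TT$, $r\neq \sigma(t)$, so that the difference quotient becomes
\[
\frac{v^{-1}\bigl(\sigma_{\tilde{\TT}}(\tau)\bigr)-v^{-1}(s)}{\sigma_{\tilde{\TT}}(\tau)-s}
\;=\;\frac{\sigma(t)-r}{v(\sigma(t))-v(r)}.
\]
Since $v$ is strictly increasing and $\Delta v(t)$ exists, $v^{-1}$ is continuous at $\tau$, so $s\to\tau$ in $\tilde{\TT}$ forces $r\to t$ in $\TT$; and the $\Delta$-differentiability of $v$ at $t$ gives $v(\sigma(t))-v(r)=\Delta v(t)\bigl(\sigma(t)-r\bigr)+o(\sigma(t)-r)$ as $r\to t$. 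The quotient above therefore tends to $1/\Delta v(t)$, which both establishes the existence of $\Delta_{\tilde{\TT}}(v^{-1})(\tau)$ and identifies its value.

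The main obstacle is the manipulation of limits across the two time scales, and it splits naturally into the two usual cases. At a right-scattered $t$, both $\sigma(t)$ and $\sigma_{\tilde{\TT}}(\tau)=v(\sigma(t))$ are isolated next points, so everything reduces to the single finite difference $(\sigma(t)-t)/(v(\sigma(t))-v(t))$, and nothing has to be proved beyond arithmetic. At a right-dense $t$, the point $\tau$ is right-dense in $\tilde{\TT}$ (again because $v$ is a strictly increasing bijection of time scales), and one must justify the change of variables $s=v(r)$ in the limit: this rests on continuity of $v^{-1}$, which itself follows from the strict monotonicity of $v$ together with the fact that $v(\TT)=\tilde{\TT}$ is a time scale. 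Once this is handled the proof is a one-liner via the chain rule.
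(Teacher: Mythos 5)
Your proof is correct, and it takes a genuinely different (and more complete) route than the paper, which does not prove this statement at all: it quotes it as a preliminary from Bohner--Peterson (Theorem 1.97 there) with the remark that it is obtained from the time-scale chain rule. The chain-rule derivation you sketch first --- differentiate the identity $v^{-1}\circ v=\mathrm{id}$ via Theorem \ref{tscr} --- is indeed the standard one, but as you rightly observe it is incomplete on its own, since Theorem \ref{tscr} lists the existence of $\Delta_{\tilde{\TT}}\left(v^{-1}\right)$ at $v(t)$ among its hypotheses, and that existence is part of what the statement asserts. Your second paragraph closes exactly this gap, and in doing so makes the chain rule superfluous: the difference-quotient computation $\bigl(\sigma(t)-r\bigr)/\bigl(v(\sigma(t))-v(r)\bigr)\to 1/\Delta v(t)$, valid uniformly in the right-dense and right-scattered cases, establishes existence and value simultaneously. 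What your route buys is a self-contained, hypothesis-clean proof; what the paper buys by citation is brevity. One justification in your write-up should be sharpened: the intertwining $\sigma_{\tilde{\TT}}(v(t))=v(\sigma(t))$ does not follow from strict monotonicity and bijectivity alone --- at a right-dense $t$ it requires (right-)continuity of $v$ at $t$, since a monotone $v$ could jump there and open a gap in $\tilde{\TT}$ just above $v(t)$, making $\sigma_{\tilde{\TT}}(v(t))>v(t)=v(\sigma(t))$. This is harmless in your argument, because $\Delta$-differentiability of $v$ at $t$ implies continuity at $t$ and you invoke the intertwining only at such points, but that is the reason and it should be said explicitly.
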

Another formula from the chain rule is the substitution rule for integrals (see \cite[Theorem 1.98]{bohn}).
\begin{theorem}[Substitution]
Assume that \fonctionsansdef{v}{\TT}{\R} is strictly increasing and $\tilde{\TT}:=v(\TT)$ is a time-scale. If \fonctionsansdef{f}{\TT}{\R} is a rd-continuous function and $v$ is differentiable with rd-continuous derivative, then for $a,b\in\TT$,
\begin{equation}
\int_{a}^{b} f(t)\Delta v(t)\Delta t = \int_{v(a)}^{v(b)}\left(f\circ v^{-1}\right)(s)\Delta_{\tilde{\TT}} s.
\end{equation}
\end{theorem}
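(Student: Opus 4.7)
The plan is to follow the Jost-style strategy of Bartosiewicz-Torres but carefully correct the point at which their argument breaks. I would split the proof into two steps: (i) establish the time-scale Noether conservation law for transformations that do not change time, and (ii) reduce the general case to an infinitesimal invariance identity obtained by differentiating (\ref{invariance}) directly in $s$ at $s=0$, then compute $\nabla I$ by hand.

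For step (i), specialize (\ref{invariance}) to $g_s^0=\mathrm{id}$, so $\zeta=0$, $\tilde{\TT}_s=\TT$, and $\tau_a=t_a$, $\tau_b=t_b$. Differentiating both sides in $s$ at $s=0$ under the integral (legitimate by rd-continuity and smoothness of the group action) gives
\begin{equation*}
\int_{t_a}^{t_b}\bigl[\partial_x L(\star)\cdot\xi(x(t)) + \partial_v L(\star)\cdot\Delta\xi(x(t))\bigr]\Delta t = 0
\end{equation*}
for every subinterval $[t_a,t_b]\subset[a,b]$ with endpoints in $\TT$. Arbitrariness of the endpoints forces the integrand to vanish on $\TKk$. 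Combining this pointwise identity with the Euler-Lagrange equation (\ref{tsel}) and the $\nabla$-Leibniz rule then yields $\nabla[\xi^{\sigma}\cdot\partial_v L(\star)]=0$ on $\Tk$, which is the autonomous time-scale Noether law and matches the discrete specialization quoted after Theorem \ref{main}.

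For step (ii), I would compute $\nabla I(t,x(t))$ directly from (\ref{conslaw}). By the $\nabla$-analogue of Theorem \ref{thm_antiderivative}, the antiderivative term contributes exactly
\begin{equation*}
\zeta\bigl[\nabla\sigma\,\partial_t L(\star)-\nabla(L(\star)-\partial_v L(\star)\cdot\Delta x)\bigr],
\end{equation*}
so the whole question reduces to showing that the $\nabla$-derivative of the two boundary-type pieces $\zeta^{\sigma}(L-\partial_v L\cdot\Delta x)$ and $\xi^{\sigma}\partial_v L$ cancels this contribution on solutions of (\ref{tsel}). The input needed is an extended infinitesimal invariance identity for general $g_s^0$: differentiating (\ref{invariance}) at $s=0$, using the substitution theorem and the time-scale chain rule to transport the right-hand integral back to $\TT$, and collecting the $s$-derivatives of $\tau_a,\tau_b$ and of $\tilde{\TT}_s$, produces the identity of Lemma \ref{key_ts}, which after expansion of $\Delta_{\tilde{\TT}_s}(g_s^1\circ x\circ(g_s^0)^{-1})$ at $s=0$ involves precisely the mixed quantity $\zeta\,\nabla\sigma\,\partial_t L+\nabla(\partial_v L\cdot\Delta x-L)$ that appears in (\ref{conslaw}). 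Arbitrariness of $[t_a,t_b]$ converts this into a pointwise identity on $\TKk$, and substituting it back into the expression for $\nabla I$ together with (\ref{tsel}) gives the desired $\nabla I=0$.

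The main obstacle is precisely the step the \cite{BT} argument elides: on a general time scale the continuous identity $\partial_t L=\frac{d}{dt}(L-\partial_v L\,\dot x)$ is replaced by a nontrivial $\Delta$-$\nabla$ mismatch, and the extra term $\nabla\sigma\,\partial_t L-\nabla(L-\partial_v L\cdot\Delta x)$ no longer vanishes. Carrying out the $s$-derivative of (\ref{invariance}) rigorously (justifying differentiation under the integral when the domain of integration $[\tau_a,\tau_b]\subset\tilde{\TT}_s$ itself moves with $s$, using the admissibility hypotheses on $G$ and the substitution theorem in the preliminaries), and then verifying that the resulting integrand is ld-continuous so that its $\nabla$-antiderivative is well-defined by Theorem \ref{thm_antiderivative}, are the principal technical steps. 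Once the extra term is retained and integrated rather than discarded, the $\nabla$-Leibniz rule together with Euler-Lagrange produces the clean cancellation $\nabla I(t,x(t))=0$.
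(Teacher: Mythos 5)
Your proposal does not prove the statement you were given. The statement is the substitution rule for $\Delta$-integrals, i.e.\ the identity $\int_a^b f(t)\Delta v(t)\Delta t=\int_{v(a)}^{v(b)}(f\circ v^{-1})(s)\Delta_{\tilde{\TT}}s$, which in the paper is a quoted preliminary (cited from \cite[Theorem 1.98]{bohn}, with no proof given in the text). What you have written instead is a proof sketch of the time-scale Noether theorem (Theorem \ref{main}): invariance, infinitesimal criteria, Euler--Lagrange equations, and the computation of $\nabla I$ are all irrelevant to the substitution formula. Worse, your sketch explicitly \emph{invokes} the substitution theorem as a tool (``using the substitution theorem and the time-scale chain rule to transport the right-hand integral back to $\TT$''), so read as a proof of that theorem it is circular: you cannot assume the very identity you are asked to establish.

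The actual argument is short and rests on the chain rule, Theorem \ref{tscr}. Since $v$ is strictly increasing, it carries right-dense (resp.\ right-scattered) points of $\TT$ to right-dense (resp.\ right-scattered) points of $\tilde{\TT}=v(\TT)$, so $f\circ v^{-1}$ is rd-continuous on $\tilde{\TT}$; by Theorem \ref{thm_antiderivative} it admits a $\Delta_{\tilde{\TT}}$-antiderivative $F$ on $\tilde{\TT}$. Then Theorem \ref{tscr} gives
\begin{equation*}
\Delta\left(F\circ v\right)=\left(\left(\Delta_{\tilde{\TT}}F\right)\circ v\right)\Delta v=\left(f\circ v^{-1}\circ v\right)\Delta v=f\,\Delta v ,
\end{equation*}
and $f\,\Delta v$ is rd-continuous on $\TT$ (product of rd-continuous functions), so $F\circ v$ is a $\Delta$-antiderivative of $f\,\Delta v$. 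Integrating from $a$ to $b$ yields $\int_a^b f(t)\Delta v(t)\Delta t=F(v(b))-F(v(a))=\int_{v(a)}^{v(b)}(f\circ v^{-1})(s)\Delta_{\tilde{\TT}}s$, which is the claim. None of the Noether-theoretic machinery you develop is needed, and conversely your Noether sketch cannot stand in for this proof.
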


\section{Proof of the main result}
\label{proof}

We first rewrite the invariance relation (\ref{invariance}) in order to have the same domain of integration.

\begin{lemma}
\label{changement_bornes_ts}
Let $\mathcal{L}$ be a time-scale Lagrangian functional invariant under the action of the group of diffeomorphisms $g$. Then, we have
\begin{equation}
\label{invar}
\int_{a}^{b}L\left(t,x(t),\Delta  x(t)\right)\Delta  t = \int_{a}^{b}L\left (g^0_s(t),(g^1_s\circ x)(t),\Delta  \left(g_s^1\circ x \right)(t) \frac{1}{\Delta  g_s^0(t)}\right )\Delta  g_s^0(t)\Delta  t.
\end{equation}
\end{lemma}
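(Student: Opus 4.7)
The plan is to start from the invariance identity (\ref{invariance}) taken on the full interval (so $t_a=a$, $t_b=b$, $\tau_a=g^0_s(a)$, $\tau_b=g^0_s(b)$) and rewrite the right-hand side, which is an integral over the image time-scale $\tTT_s=g^0_s(\TT)$, as an integral over $[a,b]\subset\TT$ by means of the substitution theorem. The admissibility hypotheses are tailored exactly so that these tools apply: $g^0_s$ is strictly increasing, $\tTT_s$ is a time-scale, $\Delta g^0_s$ is rd-continuous and nonzero, and $\Delta_{\tTT_s}((g^0_s)^{-1})$ exists.

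The first concrete step is to set
\[
h(\tau)=L\!\left(\tau,\;(g^1_s\circ x\circ (g^0_s)^{-1})(\tau),\;\Delta_{\tTT_s}\!\left(g^1_s\circ x\circ (g^0_s)^{-1}\right)(\tau)\right),
\]
so the right-hand side of (\ref{invariance}) is exactly $\int_{g^0_s(a)}^{g^0_s(b)} h(\tau)\,\Delta_{\tTT_s}\tau$. Applying the substitution formula with $v=g^0_s$ and $f=h\circ g^0_s$, one gets
\[
\int_{g^0_s(a)}^{g^0_s(b)} h(\tau)\,\Delta_{\tTT_s}\tau=\int_a^b h(g^0_s(t))\,\Delta g^0_s(t)\,\Delta t.
\]

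The second step is to eliminate the $\Delta_{\tTT_s}$-derivative hiding inside $h\circ g^0_s$. For this I apply the time-scale chain rule (Theorem \ref{tscr}) to $w=g^1_s\circ x\circ(g^0_s)^{-1}:\tTT_s\to\R$ and $v=g^0_s:\TT\to\tTT_s$: since $w\circ v=g^1_s\circ x$,
\[
\Delta(g^1_s\circ x)(t)=\left(\Delta_{\tTT_s}(g^1_s\circ x\circ (g^0_s)^{-1})\circ g^0_s\right)(t)\cdot \Delta g^0_s(t),
\]
and because $\Delta g^0_s(t)\neq 0$ we may solve for the composition:
\[
\Delta_{\tTT_s}(g^1_s\circ x\circ (g^0_s)^{-1})(g^0_s(t))=\frac{\Delta(g^1_s\circ x)(t)}{\Delta g^0_s(t)}.
\]
Substituting this into $h(g^0_s(t))$ yields precisely the integrand on the right-hand side of (\ref{invar}), which gives the claimed equality.

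The only real subtleties are bookkeeping ones: checking that the hypotheses of the substitution theorem hold (rd-continuity of the integrand, which follows from $x\in C^{1,\Delta}_{\mathrm{rd}}(\TT)$, rd-continuity of $\Delta g^0_s$, and smoothness of $L$), and making sure one may divide by $\Delta g^0_s$ at every point of $\TK$. Both are guaranteed by the four admissibility conditions, so no genuine obstacle arises; the lemma is essentially a combination of (\ref{invariance}), the substitution theorem and the chain rule.
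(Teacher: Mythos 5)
Your proposal is correct and takes essentially the same route as the paper's proof: both rewrite the right-hand side of the invariance identity (\ref{invariance}) via the substitution theorem and use the time-scale chain rule to identify $\Delta_{\tTT_s}\bigl(g^1_s\circ x\circ (g^0_s)^{-1}\bigr)\circ g^0_s$ with $\Delta(g^1_s\circ x)/\Delta g^0_s$. The only cosmetic difference is that you apply the chain rule in the direction $w\circ g^0_s=g^1_s\circ x$ and divide by $\Delta g^0_s\neq 0$, whereas the paper applies it to the composition with $(g^0_s)^{-1}$ and then invokes the inverse-derivative theorem, which amounts to the same computation.
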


The proof is given in Section \ref{proof_changement_bornes_ts}. \\

As for the classical case, we construct an extended Lagrangian functional $\bar{\mathcal{L}}$ associated with the autonomous Lagrangian $\bar{L}$ as follows: \\

Let $\bar{\mathcal{L}}: C^{2}_{\Delta, \nabla}([a,b],\R) \times C^{2}_{\Delta, \nabla}([a,b],\R) \rightarrow \R$ defined by
\begin{equation}
\bar{\mathcal{L}}(t,x) =\int_{a}^{b} \bar{L}\left(t(\tau),x(t(\tau)),\Delta_{\tTT}t(\tau),\Delta_{\tTT}x(t(\tau))\right)\Delta \tau.
\end{equation}
where $\tilde{L}: \R\times \R^{d}\times\R\times \R^{d} \rightarrow \R$ is defined by
\begin{equation}
\bar{L}(t,x,w,v)=L\left(t,x,\frac{v}{w}\right)w.
\end{equation}
which is the same as the classical case. We define the \emph{time-scale bundle path class} denoted by $\bar{\mathsf{F}}$ and defined by
\begin{equation}
\bar{\mathsf{F}} = \{(t,x)\in C^{2}_{\Delta, \nabla}([a,b],\R) \times C^{2}_{\Delta, \nabla}([a,b],\R) \ ; \ \di \tau \longmapsto(t(\tau),x(\tau))=(\tau,x(\tau)\}.
\end{equation}

We have the following proposition:

\begin{proposition}
The restriction of the Lagrangian function $\bar{\mathcal{L}}$ to a path $\gamma=(t,x) \in \bar{\mathsf{F}}$ satisfies
\begin{equation}
\label{restri_equal_ts}
\bar{\mathcal{L}}(t,x)=\mathcal{L}(x).
\end{equation}
\end{proposition}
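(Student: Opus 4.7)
The plan is a direct unwinding of the definitions, using the fact that on the class $\bar{\mathsf{F}}$ the first component $t$ is forced to be the identity.

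First I would observe that by definition of $\bar{\mathsf{F}}$, for $\gamma=(t,x)\in\bar{\mathsf{F}}$ one has $t(\tau)=\tau$ for all $\tau\in\TT$. In particular the image time scale $\tTT:=t(\TT)$ equals $\TT$, so $\Delta_{\tTT}$ coincides with $\Delta$. Applying the definition of $\Delta$-differentiability (or equivalently Proposition \ref{rappeldelta2} for the constant-graininess increment of the identity) gives $\Delta_{\tTT}t(\tau)=1$ for every $\tau\in\TK$. Moreover, since $t$ is the identity, $x(t(\tau))=x(\tau)$ and $\Delta_{\tTT}x(t(\tau))=\Delta x(\tau)$.

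Next I would substitute these three facts into the defining expression
\begin{equation*}
\bar{\mathcal{L}}(t,x) =\int_{a}^{b} \bar{L}\left(t(\tau),x(t(\tau)),\Delta_{\tTT}t(\tau),\Delta_{\tTT}x(t(\tau))\right)\Delta \tau,
\end{equation*}
which reduces the integrand to $\bar{L}(\tau,x(\tau),1,\Delta x(\tau))$. Using the explicit form $\bar{L}(t,x,w,v)=L(t,x,v/w)\,w$, the integrand collapses to $L(\tau,x(\tau),\Delta x(\tau))\cdot 1$. Therefore
\begin{equation*}
\bar{\mathcal{L}}(t,x)=\int_{a}^{b} L(\tau,x(\tau),\Delta x(\tau))\,\Delta\tau=\mathcal{L}(x),
\end{equation*}
which is the claimed identity.

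There is no real obstacle; the only point requiring slight care is the identification $\Delta_{\tTT}t=1$, since in principle one might worry that the two time scales $\TT$ and $\tTT$ carry different $\Delta$-operators. But because the paths in $\bar{\mathsf{F}}$ are precisely those for which $t$ is the identity, the two time scales coincide and no genuine change-of-variables argument is needed—unlike in the general invariance identity (\ref{invar}), where Lemma \ref{changement_bornes_ts} was required.
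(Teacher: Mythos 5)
Your proof is correct and follows essentially the same route as the paper: both arguments unwind the definition of $\bar{L}$, use $t(\tau)=\tau$ and $\Delta_{\tTT}t(\tau)=1$ on the bundle path class (hence $\tTT=\TT$), and substitute to reduce the integrand to $L(\tau,x(\tau),\Delta x(\tau))$. The only cosmetic point is your parenthetical appeal to Proposition \ref{rappeldelta2}, which concerns functions with vanishing $\Delta$-derivative and is not quite the right reference for $\Delta t = 1$; your primary justification via the definition of $\Delta$-differentiability is the correct one and suffices.
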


\begin{proof}
Let $\gamma=(t,x) \in \bar{\mathsf{F}}$. By definition, we have
\begin{equation*}
\tilde{L}\left(t(\tau),x(\tau),\Delta_{\tTT}t(\tau),\Delta_{\tTT}x(t(\tau)\right)=L\left(t(\tau),x(t(\tau)),\Delta_{\tTT}x(t(\tau))\frac{1}{\Delta_{\tTT}t(\tau)}\right)\Delta_{\tTT}t(\tau).
\end{equation*}
As $\gamma$ is a bundle path, we have $t(\tau)=\tau$ and $\Delta_{\tTT}t(\tau)=1$. In consequence, $\tilde{\TT}=\TT$ and we obtain
\begin{equation*}
\tilde{\mathcal{L}}(t,x)=\int_{a}^{b} \tilde{L}\left(t(\tau),x(t(\tau)),\Delta_{\tTT}t(\tau),\Delta_{\tTT}x(t(\tau))\right)\Delta_{\tTT} \tau=\int_{a}^{b} L\left(\tau,x(\tau),\Delta x(\tau)\right)\Delta\tau = \mathcal{L}(x).
\end{equation*}
\end{proof}

In order to formulate the time-scale Euler-Lagrange equation for the extended autonomous Lagrangian, we need to have the $\nabla_{\tTT_s}$-differentiability of $\bar{\sigma}$. We have:

\begin{lemma}
	\label{nabla_diff_TTS}
	Let $s\in\R$. Let $\bar{\sigma}_s$ to be the forward jump operator over $\tTT_s$. Assume that $\sigma$ is $\nabla$-differentiable on $\Tk$ then $\bar{\sigma}_s$ is $\nabla_{\tTT_s}$-differentiable on $(\tTT_s)^\kappa_\kappa$.
\end{lemma}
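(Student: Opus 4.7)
The plan is to reduce the $\nabla_{\tTT_s}$-differentiability of $\bar{\sigma}_s$ to the hypothesised $\nabla$-differentiability of $\sigma$ via the conjugation identity
$$\bar{\sigma}_s \;=\; g_s^0 \circ \sigma \circ (g_s^0)^{-1}.$$
This identity holds because $g_s^0$ is a strictly increasing continuous bijection of $\TT$ onto $\tTT_s$ and therefore preserves the order structure: for $\tau = g_s^0(t)\in\tTT_s$ one has $\bar{\sigma}_s(\tau)=\inf\{g_s^0(u):u\in\TT,\ u>t\}=g_s^0(\sigma(t))$, and similarly $\bar{\rho}_s(\tau)=g_s^0(\rho(t))$. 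In particular $(g_s^0)^{-1}$ maps $(\tTT_s)^\kappa_\kappa$ bijectively onto $\TT^\kappa_\kappa\subset\Tk$, so that $\nabla\sigma$ is available at the preimage of any point where we need to differentiate.

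Fix $\tau_0\in(\tTT_s)^\kappa_\kappa$ and set $t_0=(g_s^0)^{-1}(\tau_0)$. If $\tau_0$ is left-scattered, then so is $t_0$, and the $\nabla_{\tTT_s}$-derivative at $\tau_0$ reduces, by the definition of $\nabla$-derivative at scattered points, to the finite ratio
$$\frac{g_s^0(\sigma(t_0))-g_s^0(\sigma(\rho(t_0)))}{g_s^0(t_0)-g_s^0(\rho(t_0))},$$
whose denominator is nonzero by strict monotonicity of $g_s^0$. If $\tau_0$ is left-dense, then $t_0$ is left-dense, and parametrising points of $\tTT_s$ by $\theta=g_s^0(u)$ with $u\in\TT$, the relevant difference quotient becomes
$$\frac{g_s^0(\sigma(u))-g_s^0(\sigma(t_0))}{g_s^0(u)-g_s^0(t_0)}\;=\;\frac{g_s^0(\sigma(u))-g_s^0(\sigma(t_0))}{\sigma(u)-\sigma(t_0)}\cdot\frac{\sigma(u)-\sigma(t_0)}{u-t_0}\cdot\frac{u-t_0}{g_s^0(u)-g_s^0(t_0)}.$$
The factorisation is legitimate because $\sigma$ is strictly increasing on $\TT$, so $\sigma(u)\neq\sigma(t_0)$ whenever $u\neq t_0$. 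Letting $u\to t_0$ through $\TT$, the middle factor converges to $\nabla\sigma(t_0)$ by hypothesis; the third converges to $1/(g_s^0)'(t_0)\neq 0$ by the classical smoothness of $g_s^0$ together with the admissibility condition $\Delta g_s^0\neq 0$; and the first factor converges either to $(g_s^0)'(\sigma(t_0))$ (when $t_0$ is right-dense, using that $\sigma(u)\to\sigma(t_0)=t_0$, a consequence of $\nabla\sigma(t_0)$ being finite) or to the finite quotient $[g_s^0(t_0)-g_s^0(\sigma(t_0))]/[t_0-\sigma(t_0)]$ (when $t_0$ is right-scattered, in which case only $u<t_0$ contributes and $\sigma(u)\to t_0\neq\sigma(t_0)$). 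In every case the overall limit exists.

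The main obstacle is the left-dense case: since $\sigma$ need not be continuous at left-dense right-scattered points, no single-step chain rule for $w\circ\sigma$ applies directly. The three-term factorisation above is the mechanism that treats the right-dense and right-scattered sub-cases of $t_0$ uniformly; it relies on the strict monotonicity of $\sigma$ to split the quotient and on the classical $C^1$-smoothness of $g_s^0$ as a diffeomorphism of $\R$ to pass to the limit in the outer factors.
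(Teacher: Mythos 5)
Your proof is correct in substance but takes a genuinely different route from the paper's. The paper disposes of the lemma in two lines: it uses the commutation relation $\bar{\sigma}_s\circ g_s^0=g_s^0\circ\sigma$ (written there, somewhat loosely, as $\bar{\sigma}=\sigma\circ g_s^0 = g_s^0\circ\sigma$) and then invokes the time-scale chain rule of Theorem \ref{tscr}, together with the $\Delta$-differentiability of $g_s^0$ and the admissibility hypothesis that $\Delta_{\tTT_s}\left(g_s^0\right)^{-1}$ exists, to transport the $\nabla$-differentiability of $\sigma$ to $\bar{\sigma}_s$ on $(\tTT_s)^\kappa_\kappa$. You instead verify differentiability by hand through the difference quotient, splitting on the left-dense/left-scattered nature of the point and using a three-factor decomposition in the dense case. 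What your approach buys: it is self-contained, and it sidesteps a delicate point that the paper's citation of Theorem \ref{tscr} glosses over, namely that the chain rule there requires a \emph{strictly increasing} inner function, whereas $\sigma$ is in general only nondecreasing; your factorisation is a hand-made chain rule needing only local injectivity of $\sigma$ near the point considered. What it costs: you lean on the classical $C^1$-smoothness of $g_s^0$ and the nonvanishing of $(g_s^0)'$ (defensible, since the paper postulates a one-parameter group of diffeomorphisms, but stronger than what the paper's argument uses, which stays within the time-scale differentiability hypotheses).

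Two of your justifications should be repaired, though neither is fatal. First, the blanket claim that ``$\sigma$ is strictly increasing on $\TT$'' is false: if $t_1$ is right-scattered and its successor $t_2=\sigma(t_1)$ is right-dense, then $\sigma(t_1)=\sigma(t_2)=t_2$. What you actually need --- that $\sigma(u)\neq\sigma(t_0)$ for $u\neq t_0$ near a \emph{left-dense} $t_0$ --- does hold, but for a different reason: left-density of $t_0$ supplies points of $\TT$ strictly between any $u<t_0$ and $t_0$, whence $\sigma(u)<t_0\leq\sigma(t_0)$; and for $u>t_0$ (relevant only when $t_0$ is right-dense) one has $\sigma(u)\geq u>t_0=\sigma(t_0)$. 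Second, at a left-scattered point the $\nabla$-derivative does not ``reduce by definition'' to the backward quotient when that point is also right-dense: the defining limit then samples points approaching from the right as well, and its existence requires continuity of $\bar{\sigma}_s$ there. This continuity does hold --- at a right-dense $t_0$ one has $\sigma(s)\to t_0=\sigma(t_0)$ as $s\downarrow t_0$ in $\TT$, and $g_s^0$ is continuous --- but it must be stated, since it is exactly the step that can fail for a general function at such points.
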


\begin{proof}
	Let $s\in \R$. By definition, $\di \bar{\sigma} = \sigma\circ g^0_s$ and we have $\sigma\circ g^0_s= g^0_s \circ \sigma$. As $g_s^0$ is $\Delta$-differentiable on $\TK$ and $\sigma$ is $\nabla$-differentiable on $\Tk$ then, from Theorem \ref{tscr}, we obtain that $g^0_s \circ \sigma$ is $\nabla$-differentiable on $\TKk$. As $\tTT_s=g^0_s(\TT)$, we obtain the result.
\end{proof}

In what follows, we assume that $\sigma$ is $\nabla$-differentiable on $\Tk$.

\begin{lemma}
\label{key_ts}
A path $\gamma=(t,x)\in\bar{\mathsf{F}}$ is a critical point of $\bar{\mathcal{L}}$ if, and only if, $x$ is a critical point of $\mathcal{L}$ and for all $t\in\TKk$ we have
	\begin{equation}
	(\boldsymbol{\hexstar}) \quad \nabla \sigma(\tau)\frac{\partial L}{\partial t}(t,x(t),\Delta x(t))+\nabla\left(\Delta x(t)\frac{\partial L}{\partial v}(t,x(t),\Delta x(t)) -L(t,x(t),\Delta x(t))\right)=0.
	\end{equation}
\end{lemma}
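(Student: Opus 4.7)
The plan is to view $\bar{\mathcal{L}}$ as an honest autonomous time-scale variational problem with \emph{two} dependent variables $t(\tau)$ and $x(\tau)$ and Lagrangian $\bar{L}(t,x,w,v)$ depending on the state and on its $\Delta_{\tTT}$-derivative $(w,v)=(\Delta_{\tTT}t,\Delta_{\tTT}x)$. Applying the vector-valued time-scale Euler--Lagrange equation \eqref{tsel} componentwise produces two equations, one for the $x$-component and one for the $t$-component. Restricting these to the bundle class $\bar{\mathsf{F}}$, where $t(\tau)=\tau$, $w=1$, $v=\Delta x$ and $\tTT=\TT$, will collapse the $x$-equation onto \eqref{tsel} for $\mathcal{L}$ and turn the $t$-equation into exactly $(\boldsymbol{\hexstar})$. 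Reversing the computation gives the converse.

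First I would check that the standing hypotheses suffice to derive the two Euler--Lagrange equations for $\bar{\mathcal{L}}$: the required $\nabla$-differentiability of the forward jump operator for the time scale carrying the extended problem is provided by Lemma \ref{nabla_diff_TTS} (and by the assumption made just before the lemma that $\sigma$ is $\nabla$-differentiable on $\Tk$ for the bundle path itself), while the regularity of $\bar{L}$ inherits directly from $L$. A standard first-variation argument, as in \cite{bourdin1}, then yields on $\TKk$ the pair
\begin{align*}
\nabla\!\left[\partial_v \bar{L}\right] &= \nabla\sigma\,\partial_x \bar{L}, \\
\nabla\!\left[\partial_w \bar{L}\right] &= \nabla\sigma\,\partial_t \bar{L}.
\end{align*}

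Second, I would compute the four partials of $\bar{L}(t,x,w,v)=L(t,x,v/w)\,w$. Writing $u=v/w$, a direct differentiation gives $\partial_x\bar{L}=w\,\partial_x L(t,x,u)$, $\partial_v\bar{L}=\partial_v L(t,x,u)$, $\partial_t\bar{L}=w\,\partial_t L(t,x,u)$ and, using the product and chain rules, $\partial_w\bar{L}=L(t,x,u)-u\,\partial_v L(t,x,u)$. Evaluating at a bundle path (so $w=1$, $u=\Delta x$), the first EL equation above becomes precisely \eqref{tsel}, while the second becomes $\nabla\!\left[L(t,x,\Delta x)-\Delta x\,\partial_v L(t,x,\Delta x)\right]=\nabla\sigma\,\partial_t L(t,x,\Delta x)$, which is $(\boldsymbol{\hexstar})$ after moving everything to one side. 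Conversely, if $x$ satisfies \eqref{tsel} and $(\boldsymbol{\hexstar})$ holds, the same identities show that $\gamma=(t,x)\in\bar{\mathsf{F}}$ solves both extended Euler--Lagrange equations, so it is critical for $\bar{\mathcal{L}}$.

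The step I expect to be the main obstacle is not the chain-rule calculation itself but the careful justification of the extended Euler--Lagrange system for $\bar{\mathcal{L}}$ when the candidate critical point has a variable time component $t(\tau)$: one must ensure that, for any admissible variation of $t$, the image $\tTT_s=t_s(\TT)$ is still a time scale on which $\Delta_{\tTT_s}$-calculus behaves well enough to differentiate under the integral and integrate by parts. This is exactly what the admissibility conditions listed before Theorem \ref{main} together with Lemma \ref{nabla_diff_TTS} are designed to provide; once this foundational check is in place, the equivalence between critical points of $\bar{\mathcal{L}}$ on $\bar{\mathsf{F}}$ and the pair $\{$\eqref{tsel}, $(\boldsymbol{\hexstar})\}$ for $x$ is the short computation above.
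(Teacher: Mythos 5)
Your proposal is correct and follows essentially the same route as the paper: the paper likewise writes the two-component Euler--Lagrange system for $\bar{L}$ (invoking Lemma \ref{nabla_diff_TTS} for the $\nabla$-differentiability of the extended forward jump operator and \cite{bourdin1} for the Euler--Lagrange equation), computes the same four partial derivatives of $\bar{L}(t,x,w,v)=L(t,x,v/w)\,w$, restricts to the bundle class where $t(\tau)=\tau$ and $\Delta_{\tTT}t=1$ so that the $x$-equation collapses to \eqref{tsel} and the $w$-equation to $(\boldsymbol{\hexstar})$, and obtains the converse by reversing the computation. Your partials $\partial_w\bar{L}=L-u\,\partial_v L$, $\partial_v\bar{L}=\partial_v L$, $\partial_t\bar{L}=w\,\partial_t L$, $\partial_x\bar{L}=w\,\partial_x L$ agree exactly with the paper's equations \eqref{eq_partialtildets1}--\eqref{eq_partialtildets2}.
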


The proof is given in Section \ref{proof_key_ts}. \\

Contrary to the continuous case, Lemma \ref{key_ts} implies that extended solutions of the initial Lagrangian are not automatically solutions of the extended Euler-Lagrange equation. This implies that one can not use the Noether's theorem but only the infinitesimal invariance criterion.

\begin{lemma}
\label{invariance_Ltilde_ts}
Let $\mathcal{L}$ be a time-scale Lagrangian functional invariant under the one-parameter group of diffeomorphisms $g$. Then, the time-scale Lagrangian functional $\bar{\mathcal{L}}$ is invariant under the one-parameter group of diffeomorphisms $g$ over $\bar{\mathsf{F}}$.
\end{lemma}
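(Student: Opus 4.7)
The plan is a direct computation reducing invariance of $\bar{\mathcal{L}}$ on bundle paths to invariance of $\mathcal{L}$ via the restriction identity (\ref{restri_equal_ts}) and the reformulation of Lemma \ref{changement_bornes_ts}. Fix $s\in\R$ and a bundle path $\gamma=(t,x)\in\bar{\mathsf{F}}$, so that $t(\tau)=\tau$. The action of $g_s$ sends $\gamma$ to $\gamma_s:=(g_s^0\circ t,g_s^1\circ x)=(g_s^0,g_s^1\circ x)$, which generically escapes $\bar{\mathsf{F}}$ since $g_s^0$ is not the identity. What must be proved is $\bar{\mathcal{L}}(\gamma_s)=\bar{\mathcal{L}}(\gamma)$.

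First I would evaluate $\bar{\mathcal{L}}(\gamma)$: since $\gamma$ is a bundle path, the restriction identity (\ref{restri_equal_ts}) yields $\bar{\mathcal{L}}(\gamma)=\mathcal{L}(x)$ immediately.

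Next I would unfold $\bar{\mathcal{L}}(\gamma_s)$ from its definition, using $\bar{L}(t,x,w,v)=L(t,x,v/w)\,w$, to produce an integrand
$$
\bar{L}\bigl(g_s^0(\tau),(g_s^1\circ x)(\tau),\Delta g_s^0(\tau),\Delta(g_s^1\circ x)(\tau)\bigr)=L\!\left(g_s^0(\tau),(g_s^1\circ x)(\tau),\frac{\Delta(g_s^1\circ x)(\tau)}{\Delta g_s^0(\tau)}\right)\Delta g_s^0(\tau).
$$
This is exactly the integrand appearing on the right-hand side of (\ref{invar}) in Lemma \ref{changement_bornes_ts}. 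Therefore $\bar{\mathcal{L}}(\gamma_s)$ coincides with the right-hand side of (\ref{invar}), which by invariance of $\mathcal{L}$ equals $\int_a^b L(\tau,x(\tau),\Delta x(\tau))\,\Delta\tau=\mathcal{L}(x)$. Combining this with the first step gives $\bar{\mathcal{L}}(\gamma_s)=\mathcal{L}(x)=\bar{\mathcal{L}}(\gamma)$, as required; the argument for subintervals is identical since (\ref{invariance}) is postulated on every $[t_a,t_b]\subset[a,b]$.

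The proof is essentially bookkeeping, and the only delicate point is to verify that $\gamma_s$ is a legal argument of $\bar{\mathcal{L}}$ and that Lemma \ref{changement_bornes_ts} actually applies to it; this is where the admissibility hypotheses on $G$ enter. The assumption that $g_s^0$ is strictly increasing with $\Delta g_s^0$ rd-continuous and nonvanishing guarantees both that $\tTT_s=g_s^0(\TT)$ is a time scale and that the division by $\Delta g_s^0$ arising when one unfolds $\bar{L}$ is well-defined; without these conditions the chain of equalities collapses, so admissibility enters essentially and not cosmetically.
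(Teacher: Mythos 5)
Your proof is correct and takes essentially the same route as the paper's: you unfold $\bar{\mathcal{L}}(g_s(\gamma))$ via the definition of $\bar{L}$ on a bundle path (using $t(\tau)=\tau$ and $\Delta g_s^0\neq 0$), recognize the resulting integrand as the right-hand side of (\ref{invar}) from Lemma \ref{changement_bornes_ts}, and conclude that both $\bar{\mathcal{L}}(g_s(\gamma))$ and $\bar{\mathcal{L}}(\gamma)$ equal $\mathcal{L}(x)$, exactly as the paper does (your use of the restriction identity (\ref{restri_equal_ts}) is the same computation the paper performs when it refolds the final integral into $\bar{L}(\tau,x(\tau),1,\Delta x(\tau))$). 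Your explicit remark on where the admissibility hypotheses enter is a point the paper leaves implicit, but it does not change the argument.
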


The proof is given in Section \ref{proof_invariance_Ltilde_ts}.\\

We deduce from Lemma \ref{invariance_Ltilde_ts} and the {\it necessary condition of invariance} given in (\cite{BT},Theorem 2 p.1223) that 
\begin{equation}
\label{main1}
\partial_t L (\star ) . \zeta +\partial_x L (\star ) . \xi 
+\partial_v L (\star ) .\Delta \xi 
+\left (  L (\star ) -\partial_v L (\star ) .\Delta x \right ) .\Delta \zeta =0.
\end{equation}
Multiplying equation \eqref{main1} by $\nabla \sigma$ and using the Time scales Euler-Lagrange equation \eqref{tsel}, we obtain  
\begin{equation}
\partial_t L (\star ) . \nabla \sigma .  \zeta +\nabla \sigma . \partial_v L (\star ) . \Delta [\xi] +\nabla \left [ \partial_v L (\star ) \right ] . \xi 
+\left (  L (\star ) -\partial_v L (\star ) .\Delta x \right ) .\nabla \sigma . \Delta \zeta =0 .
\end{equation}
Using the relation 
\begin{equation}
\label{leib2}
\nabla \left [ f^{\sigma} g \right ] =
\nabla \sigma \Delta [f] . g +f .\nabla [g] ,
\end{equation}
we have 
\begin{equation}
\partial_t L (\star ) . \nabla \sigma . \zeta +\nabla \left [ \partial_v L (\star ) . \xi^{\sigma} \right ]  
+\left (  L (\star ) -\partial_v L (\star ) .\Delta x \right ) .\nabla \sigma . \Delta \zeta =0 .
\end{equation}
Trying to be as close as possible to the continuous case, we can use again relation (\ref{leib2}) on the last term. We obtain
\begin{equation}
\partial_t L (\star ) . \nabla \sigma .  \zeta +\nabla \left [ \partial_v L (\star ) . \xi^{\sigma} \right ]  
+\nabla \left [  \zeta .\left ( L (\star ) -\partial_v L (\star ) .\Delta x \right ) . \zeta^{\sigma} \right ] -\zeta .\nabla \left [  L (\star ) -\partial_v L (\star ) .\Delta x \right ] =0 .
\end{equation}
Taking the $\nabla$ antiderivative of this expression, we deduce the conservation law (\ref{conslaw}). This concludes the proof.

\section{The Bartosiewicz and Torres example}
\label{examples}

We consider the example of Lagrangian given in \cite{BT} and we illustrate our result with respect to the result given in \cite{BT}. Let $N\in\N^{*}$, $a,b\in \R$ with $a<b$ and let $h=(b-a)/N$. We consider the time-scale $\TT=\{t_k, \ k=0,\cdots N\}$ where $t_k=a+kh$.

\subsection{Invariance and a conservation law}

We consider the Lagrangian introduced in \cite{BT}
\begin{equation}
\label{exemple-delfim}
L(t,x, v)=\frac{x^2}{t} + tv^2
\end{equation}
for $x,v \in\mathbb{R}$. 

\begin{lemma} 
The Lagrangian functional associated to (\ref{exemple-delfim}) is invariant under the family of transformation $G=\{ \phi_s (t,x)=(t e^s , x )\}_{s\in \R}$ where its infinitesimal generator $X$ is given by
\begin{equation}
\zeta(t)=t \quad \mbox{\rm and} \quad \xi(x)=0.
\end{equation}
\end{lemma}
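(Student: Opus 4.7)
The plan is to verify the three ingredients in the statement directly from the definitions, using Lemma \ref{changement_bornes_ts} to simplify the invariance condition to a single integral identity on $[a,b]$.

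First I would check that $G = \{\phi_s\}_{s\in\R}$ is an admissible group of symmetries in the sense of the four bullet points preceding Theorem \ref{main}. Since $g_s^0(t) = t e^s$ is an affine map with positive slope $e^s$, the image $\tTT_s = e^s \TT$ is clearly a time-scale (a dilation of $\TT$), $g_s^0$ is strictly increasing, the inverse $(g_s^0)^{-1}(\tau) = \tau e^{-s}$ is affine so $\Delta_{\tTT_s}(g_s^0)^{-1} = e^{-s}$ exists, and $\Delta g_s^0 = e^s$ is a nonzero constant, hence rd-continuous. All admissibility conditions are met.

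Next I would verify invariance. By Lemma \ref{changement_bornes_ts} it suffices to show
\begin{equation*}
\int_a^b L(t,x(t),\Delta x(t))\,\Delta t
= \int_a^b L\!\left(g_s^0(t),(g_s^1\circ x)(t),\tfrac{\Delta(g_s^1\circ x)(t)}{\Delta g_s^0(t)}\right)\Delta g_s^0(t)\,\Delta t.
\end{equation*}
With $g_s^0(t)=te^s$, $g_s^1(x)=x$, and $\Delta g_s^0(t)=e^s$, the integrand on the right becomes
\begin{equation*}
\left(\frac{x(t)^2}{te^s} + te^s\cdot\frac{\Delta x(t)^2}{e^{2s}}\right)e^s
= \frac{x(t)^2}{t} + t\,\Delta x(t)^2
= L(t,x(t),\Delta x(t)),
\end{equation*}
so the two integrals coincide for every admissible $x$ and every subinterval $[t_a,t_b]\subset[a,b]$. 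This is a purely algebraic check; no time-scale subtlety arises because $\Delta g_s^0$ is constant.

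Finally I would read off the infinitesimal generator by differentiating at $s=0$:
\begin{equation*}
\zeta(t) = \left.\frac{d}{ds}\right|_{s=0}\! g_s^0(t) = \left.\frac{d}{ds}\right|_{s=0}\! te^s = t,
\qquad
\xi(x) = \left.\frac{d}{ds}\right|_{s=0}\! g_s^1(x) = 0.
\end{equation*}
The only (very mild) obstacle is justifying admissibility, since in principle one must confirm that $\tTT_s$ is a time-scale and that all derivatives exist; but as $g_s^0$ is an affine dilation, each condition reduces to a one-line observation. The rest of the lemma is a direct substitution.
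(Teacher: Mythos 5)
Your proposal is correct and follows essentially the same route as the paper: the paper's own proof is exactly the one-line algebraic check $L\left(te^s, x, \frac{\Delta x}{e^s}\right)e^s = \left(\frac{x^2}{te^s} + te^s\frac{(\Delta x)^2}{e^{2s}}\right)e^s = L(t,x,\Delta x)$, verifying condition \eqref{invar} from Lemma \ref{changement_bornes_ts}. Your additional verifications (admissibility of $G$ via the affine form of $g_s^0$, and the computation $\zeta(t)=t$, $\xi(x)=0$ by differentiating at $s=0$) are sound and in fact make explicit two points the paper leaves unstated.
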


\begin{proof}
Indeed, we have $L\left ( t e^s , x , \di\frac{\Delta x}{e^s} \right ) e^s =\left ( 
\di\frac{x^2}{t e^s} +te^s \di\frac{(\Delta x)^2}{e^{2s}} \right )e^s =L(t,x,\Delta x)$ so that condition \eqref{invar} is satisfied.
\end{proof}

In our case, the (non-shifted) Euler--Lagrange equation associated with $L$ is given by
\begin{equation}
	\nabla \left(t \Delta x(t) \right) = \frac{x}{t},
\end{equation}
and our time-scale Noether's theorem generates the following conservation law
\begin{equation}
	I(t,x,v)=\sigma(t)\left(\frac{x^2}{t}-t v^2\right) + \int_a^t \left[ -\frac{x^2}{t}+tv^2-t\nabla \left(\frac{x^2}{t}-tv^2\right)\right]\nabla t .
\end{equation}
The (shifted) Euler--Lagrange equation associated with $L$ is given by
\begin{equation}
\Delta \left(t \Delta x \right) = \frac{x^\sigma}{t},
\end{equation}
and the time-scale Noether's theorem given in \cite{BT}, generates the following conservation law
\begin{equation}
C(t,x^\sigma,v)=\sigma(t)\left(\frac{(x^\sigma)^2}{t} - t v^2\right).
\end{equation}

\begin{remark}
In \cite{BT}, the authors consider $\mathbb{T}=\{ 2^n : n\in\mathbb{N}\cup\{0\} \}$. In that case, $\sigma(t)=2 t$ for all $t\in \mathbb{T}$, which gives the expression of $C(t,x^\sigma,v)$ in \cite[Example 3]{BT}.
\end{remark}

\subsection{Simulations}

The initial conditions are chosen such that $x(1)=1$ and $\Delta x(1)=0.1$. We display in Figure \ref{result1}, the two quantities computed numerically with $a = 1$,  $b=10$ and $h=10^{-3}$. As we can see, the quantity $I(t,x,\Delta x)$ is a constant of motion over the solution of the time-scale Euler-Lagrange equation. It is clearly not the case for the quantity $C(t,x^\sigma,\Delta x)$ provided by the Noether's theorem in \cite{BT}.

\begin{figure}
\centering
\includegraphics[width=0.7\linewidth]{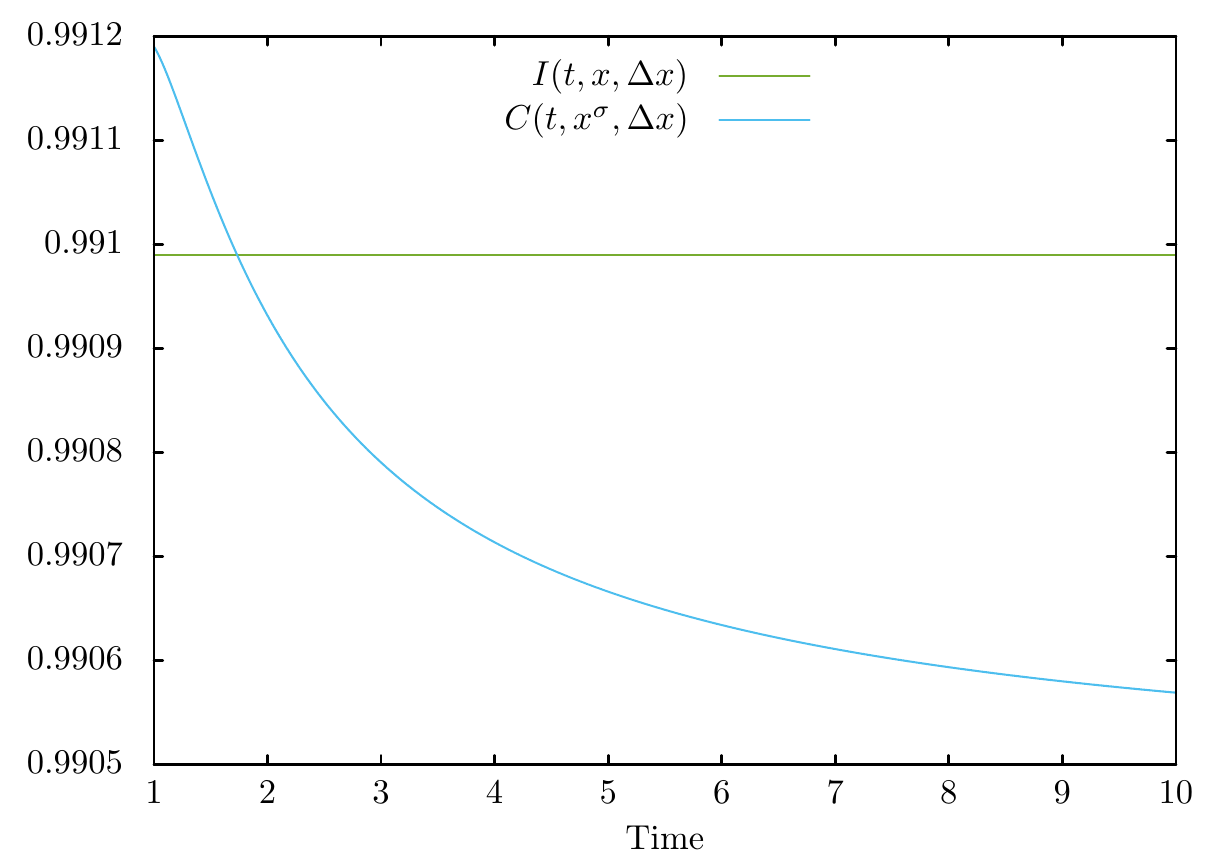}
\caption{}
\label{result1}
\end{figure}

\section{Proof of the technical Lemma}
\label{technical}

\subsection{Proof of Lemma \ref{changement_bornes_ts}}
\label{proof_changement_bornes_ts}

Using the time-scale chain rule, we obtain
\begin{equation*}
\Delta_{\tTT} \left(g^1_s\circ x \circ(g_s^0)^{-1}(\tau)\right)=\Delta \left(g_s^1\circ x \right)(t) \Delta_{\tTT_s}\left(g_s^0\right)^{-1}(\tau).
\end{equation*}
Then, using the time-scale derivative formula for inverse function, we obtain
\begin{equation*}
\Delta_{\tTT} \left(g^1_s\circ x \circ(g_s^0)^{-1}(\tau)\right)=\Delta \left(g_s^1\circ x \right)(t)\frac{1}{\Delta  g_s^0(t)}.
\end{equation*}
Using the change of variable formula for time-scale integrals, we obtain
\begin{align*}
& \int_{\tau_a}^{\tau_b}L\left(\tau,g^1_s\circ x \circ(g_s^0)^{-1}(\tau),\Delta_{\tTT} \left(g^1_s\circ x \circ(g_s^0)^{-1}(\tau)\right)\right)\Delta_{\tTT}\tau \\
&=\int_{a}^{b}L\left (g^0_s(t),(g^1_s\circ x)(t),\Delta  \left(g_s^1\circ x \right)(t) \frac{1}{\Delta  g_s^0(t)}\right )\Delta  g_s^0(t)\Delta  t.
\end{align*}
Finally, using the invariance condition in Equation \eqref{invariance}, we obtain the result. 

\subsection{Proof of Lemma \ref{key_ts}}
\label{proof_key_ts}

For the necessary condition, let $\gamma=(t,x)\in\bar{\mathsf{F}}$ be a critical point of $\bar{\mathcal{L}}$. Then, from Lemma \ref{nabla_diff_TTS} and Equation \eqref{tsel}, it satisfies the following Euler-Lagrange equations
\begin{equation}
(\textrm{EL}^{\nabla \circ \Delta })_{\bar{L}}
\left\{\begin{aligned}
&\nabla_{\tTT}\left[\frac{\partial \bar{L}}{\partial v}(\bar{\star}_\tau)\right]=\nabla \bar{\sigma}(\tau)\frac{\partial \bar{L}}{\partial x}(\bar{\star}_\tau),\\
&\nabla_{\tTT}\left[\frac{\partial \bar{L}}{\partial w}(\bar{\star}_\tau)\right]=\nabla \bar{\sigma}(\tau)\frac{\partial \bar{L}}{\partial t}(\bar{\star}_\tau),
\end{aligned}\right.
\end{equation}
for all $\tau \in (\tTT_s)^\kappa_\kappa$, where $(\bar{\star}_\tau)=\left(t(\tau),x(t(\tau)),\Delta_{\tTT}t(\tau), \Delta_{\tTT}x(t(\tau)\right)$.\\

Let $(\star_\tau)=\left(t(\tau),x(t(\tau)),\Delta_{\tTT}x(t(\tau))\frac{1}{ \Delta_{\tTT} t(\tau)}\right)$. By definition, we have
\begin{align}
&\frac{\partial \bar{L}}{\partial t}(\bar{\star}_\tau)= \frac{\partial L}{\partial t}(\star_\tau) \Delta_{\tTT}t(\tau), &\frac{\partial \bar{L}}{\partial w}(\bar{\star}_\tau) &= L\left(\star_\tau\right) - \Delta_{\tTT}x(t(\tau))\frac{1}{\Delta_{\tTT}t(\tau)} \frac{\partial L}{\partial v}(\bar{\star}_\tau), \label{eq_partialtildets1}\\
&\frac{\partial \bar{L}}{\partial x}(\bar{\star}_\tau) = \frac{\partial L}{\partial x} (\star_\tau)\Delta_{\tTT}t(\tau), &\frac{\partial \bar{L}}{\partial v} (\bar{\star}_\tau)&= \frac{\partial L}{\partial v}(\star_\tau) \label{eq_partialtildets2}.
\end{align}
As $\gamma \in \bar{\mathsf{F}}$, we have $(\star_\tau)=\left(\tau,x(\tau),\Delta x(\tau)\right)$ and $\nabla_{\tTT} \bar{\sigma}(\tau)=\nabla \sigma(\tau)$. In consequence, the first Euler-Lagrange equation is equivalent to
\begin{equation}
\label{EL1_final_ts}
\nabla\left[\frac{\partial L}{\partial v}\left(\star_\tau\right)\right]=\nabla \sigma(\tau)\frac{\partial L}{\partial x}\left(\star_\tau\right).
\end{equation}
for all $\tau \in \TKk$ and the second Euler-Lagrange equation is equivalent to
\begin{equation}
\nabla \sigma(\tau)\frac{\partial L}{\partial t}(\star_\tau)+\nabla\left(\Delta x(\tau) \frac{\partial L}{\partial v} (\star_\tau)-L(\star_\tau)\right)=0,
\end{equation}
for all $\tau \in \TKk$, which corresponds to the condition $(\boldsymbol{\hexstar})$. As Equation \ref{EL1_final_ts} is the Euler-Lagrange equation associated with the Lagrangian functional $\mathcal{L}$, we obtain that $x$ is a critical point of $\mathcal{L}$ and $(\boldsymbol{\hexstar})$ is satisfied. \\

For the sufficient condition, assume that $(\boldsymbol{\hexstar})$ is satisfied and let $x$ be a critical point of $\mathcal{L}$ and let $\gamma$ be the path such that $(t,x) \in \bar{\mathsf{F}}$. Using the same computation as previous, we obtain that $\gamma$ is a critical point of $\tilde{\mathcal{L}}$. This conclude the proof.

\subsection{Proof of Lemma \ref{invariance_Ltilde_ts}}
\label{proof_invariance_Ltilde_ts}
	Let $\gamma=(t,x)\in\bar{\mathsf{F}}$. By definition, we have
	\begin{equation}
	\bar{\mathcal{L}}(g_s(\gamma))=\int_{a}^{b} \bar{L}\left(g^0_s(t(\tau)),g^1_s\circ x (t(\tau)),\Delta_{\tTT_s}g_s^0(t(\tau)),\Delta_{\tTT_s}\left(g^1_s\circ x (t(\tau))\right)\right)\Delta_{\tTT_s}\tau.
	\end{equation}
	Using the definition of $\bar{L}$, the fact that $t(\tau)=\tau$ and $\Delta g_s^0(\tau)\neq0$ for all $\tau \in \TK$, we obtain
	\begin{equation}
	\bar{\mathcal{L}}(g_s(\gamma))=\int_{a}^{b}L\left (g^0_s(\tau),(g^1_s\circ x)(g^0_s(\tau)),\Delta \left(g_s^1\circ x  \right)(\tau) \frac{1}{\Delta g_s^0(\tau)}\right )\Delta g_s^0(\tau)\Delta \tau.
	\end{equation}
	Using the invariance of $\mathcal{L}$ with the Lemma \ref{changement_bornes_ts}, we obtain
	\begin{equation}
	\bar{\mathcal{L}}(g_s(\gamma))=\int_{a}^{b}L\left (\tau,x(\tau),\Delta x(\tau)\right )\Delta \tau.
	\end{equation}
	In consequence, as $\Delta t(\tau)=1$, we obtain
	\begin{align}
	\bar{\mathcal{L}}(g_s(\gamma))=\int_{a}^{b}\bar{L}\left (\tau,x(\tau),1,\Delta x(\tau)\right )d\tau=\bar{\mathcal{L}}(\gamma).
	\end{align}
	This concludes the proof.

\end{document}